\newtheorem{theorem}{Theorem}
\newtheorem{prop}[theorem]{Proposition}
\newtheorem{remark}[theorem]{Remark}
\newcommand{\N}{\mathbb N}
\newcommand{\RR}{\mathbb R}
\newcommand{\coNP}{{\bf coNP}}
\newcommand{\NP}{{\bf NP}}
\newcommand{\PH}{{\bf PH}}
\newcommand{\Pt}{{\bf P}}
\newcommand{\poly}{{\bf poly}}
\newcommand{\ppoly}{{\bf P/poly}}
\newcommand{\ussr}{{\bf USSR}}
\newcommand{\uussr}{{\bf UUSSR}}
\newcommand{\ssr}{{\bf SSR}}
\newcommand{\posslp}{{\bf PosSLP}}
\newcommand{\base}[1]{{\bf b_{#1}}}
\newcommand{\TCz}{\mbox{{\bf TC$^0$}}}
\newcommand{\CHtwoo}{\mbox{{\bf P}$^{\bf PP^{\bf PP^{\bf PP}}}$}}
\begin{document}
\title{{\ussr} is in {\bf  P/poly}}
\author[1]{Nikhil Balaji\thanks{nbalaji@cse.iitd.ac.in}}
\author[2]{Samir Datta\thanks{sdatta@cmi.ac.in}}
\affil[1]{Department of Computer Science and Engineering, IIT Delhi, New Delhi, India}
\affil[2]{Chennai Mathematical Institute \& UMI ReLaX, Chennai, India}

\date{}

\maketitle

\begin{abstract}
The Sum of Square Roots (\ssr) problem is the following computational problem: Given positive integers $a_1, \dots, a_k$, and signs $\delta_1, \dots, \delta_k \in \{-1, 1\}$, check if $\sum_{i=1}^k \delta_i \sqrt{a_i} > 0$. The problem is known to have a polynomial time algorithm on the \emph{real RAM} model of computation, however no sub-exponential time algorithm is known in the bit or Turing model of computation. The precise computational complexity of {\ssr} has been a notorious open problem ~\cite{ggj} over the last four decades. The problem is known to admit an upper bound in the third level of the \emph{Counting Hierarchy}, i.e., $\CHtwoo$ and  no non-trivial lower bounds are known.  Even when the input numbers are \emph{small}, i.e., given in \emph{unary}, no better complexity bound was known prior to our work. In this paper, we show that the unary variant ({\ussr}) of the sum of square roots problem is considerably easier by giving a $\Pt/\poly$ upper bound.

\end{abstract}

\section{Introduction}
The Sum of Square Roots problem (\ssr) is the following computational problem: Given positive integers $a_1, \dots, a_k$, and signs $\delta_1, \dots, \delta_k \in \{-1, 1\}$, check if $\sum_{i=1}^k \delta_i \sqrt{a_i} > 0$. It was first explicitly posed in an influential paper of Garey, Graham and Johnson~\cite{ggj} on \NP-hardness of geometric problems such as Euclidean TSP and Euclidean Steiner tree. They remark that while such geometric problems are NP-hard ``it is not at all apparent that this $\dots$ is in {\NP}", since comparing two given tours is reducible to {\ssr}, which is not known to be in \NP. In fact, even when the input graph is presented via an embedding where vertices have integer co-ordinates, the distance between vertices could still be irrational in the Euclidean metric. Hence comparing lengths of two paths on such a graph, reduces to comparing two linear combinations of square roots for which the best bounds we currently have seem to require exponential precision~\cite{BFMS}. However it has been often observed that
in practice that a near-linear bit-precision is sufficient to decide the sign of an {\ssr} instance~\cite[Chapter 45]{sharma-yap}, and it is conjectured to be in {\Pt}. {\ssr} is an important subroutine in Computational Geometry where many problems are known to be solvable in polynomial time relative to {\ssr}~\cite{euclideansp}. 
We refer the interested reader to~\cite{problem33, kayal-saha} and the references therein for a discussion on the status of {\ssr} and the stackexchange posts~\cite{cstheory.sqrt.hard, cstheory.sqrt} for some interesting discussion on {\ssr}-hard problems and related geometric questions. Along with the more general problem, {\posslp}\footnote{Given an arithmetic circuit using gates $\{+,\times\}$ whose only input is $-1$, check if it computes a positive integer.}, {\ssr} has been frequently used as to show \emph{numerical hardness} in diverse areas such as algorithmic game theory~\cite{etessami.yannakakis, UW, CIJ}, Semidefinite programming~\cite{goemans}, probabilistic verification~\cite{esparza.solving, HKbudget, KWparallel, BLW}, formal language theory and logic~\cite{HKL, LOW}.

\subsection*{Prior work}

Garey et al.~\cite{ggj} attribute Andrew Odlyzko to the observation that approximating every square root in the {\ssr} instance by rational numbers with $O(k 2^k)$ bits and summing them up suffices to infer the correct sign, which yields an {\bf EXP} upper bound for {\ssr}. The main idea behind this observation is that $\sum_{i=1}^k \delta_i \sqrt{a_i}$ is an \emph{algebraic integer} whose \emph{minimal polynomial} has degree at most $2^k$. Therefore by using standard \emph{root separation bounds} (see Proposition~\ref{prop:rootsep}), one can infer the sign of any such algebraic integer by a suitable numerical approximation.

Using the root separation bound, Tiwari~\cite{Tiwari} proved that {\ssr} can be solved in polynomial time on the unit cost arithmetic RAM model of computation via Newton iteration. Implicit in Tiwari's algorithm is an improved complexity bound: Newton iteration can be used to obtain rational approximation to the square roots up to $O(k2^k)$ bits, which can be represented as pair of $\poly(k)$-sized arithmetic circuits (for the numerator and the denominator). Evaluating such circuits can be done in {\bf PSPACE}. We now know from the work of Allender et al.~\cite{ABKM} that Tiwari's algorithm is essentially a polynomial time reduction from {\ssr} to {\posslp}, which places {\ssr} in ${\CHtwoo}$. This is currently the best known complexity upper bound for {\ssr}. A different algorithm achieving the same complexity bound, but via a reduction to \emph{matrix powering} was presented in~\cite{ABD}. All these results seem to inherently depend on root separation bounds. Unfortunately, this line of attack on {\ssr} has its limitations: there exist an integer linear combination of $k$ square roots, whose minimal polynomial has degree exactly $2^k$; Indeed consider $\alpha = \sum_{i=1}^k \sqrt{p_i}$ for distinct primes $p_i$. It is well-known from Galois theory~\cite{stewart-tall} that the minimal polynomial of $\alpha$ has degree exactly $2^k$. Therefore we require fundamentally new techniques to substantially improve the complexity of {\ssr}.

\subsection*{Our work}

Many authors have observed~\cite{ggj,paptsp} that Euclidean TSP is \emph{strongly} \NP-hard, i.e., the problem is already \NP-hard for those graphs where the co-ordinates of the graph are integers of polynomially (in the number of vertices) bounded magnitude. This motivates the following variant, {\ussr} : Given $1 \leq a_1, \dots, a_k \leq k$, and $\delta_1, \dots, \delta_k \in \{-1, 1\}$, test if $\sum_{i=1}^k \delta_i \sqrt{a_i} > 0$. We remark here that by the prime number theorem, the magnitude of numbers appearing under the square roots in the example we constructed above, i.e., $\alpha = \sum_{i=1}^k \sqrt{p_i}$ are at most ${\cal O}(k \log{k})$. Even for this simple ``unary'' variant, there is no better complexity upper bound known than $\CHtwoo$. In this work, we improve the state of affairs considerably. We show that

\begin{theorem}\label{thm:ussrppoly}
${\ussr} \in \ppoly$. 
\end{theorem}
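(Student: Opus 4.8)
The plan is to show that, once the numbers are bounded by $k$, the output of $\ussr$ is a \emph{linear threshold function} of a short integer ``signature'' of the instance, and that this threshold can be realised with integer weights of only $\poly(k)$ bits; those weights will be the nonuniform advice. First I would normalise the instance: write each $a_i=c_i^2 b_i$ with $b_i$ squarefree, which is computable in time $\poly(k)$ since $a_i\le k$, and note $c_i\le\sqrt k$. Collecting terms with equal squarefree part gives
\[
 \sum_{i=1}^k \delta_i\sqrt{a_i}\;=\;\sum_{\substack{b\le k\\ b\ \mathrm{squarefree}}} m_b\,\sqrt b,\qquad m_b:=\!\!\sum_{i:\,b_i=b}\!\!\delta_i c_i\in\Z,
\]
where $n:=\#\{b\le k:\ b\ \mathrm{squarefree}\}\le k$ and $|m_b|\le\sum_b|m_b|\le k\sqrt k=:W$. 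Since $\{\sqrt b:\ b\ \mathrm{squarefree}\}$ is $\QQ$-linearly independent, this value is $0$ exactly when the vector $m=(m_b)_b$ is zero, and otherwise it is strictly signed. So, apart from the trivial $m=0$ case, deciding $\ussr$ is the same as evaluating the homogeneous threshold $m\mapsto\operatorname{sign}\langle m,w^\star\rangle$, with $w^\star=(\sqrt b)_b\in\RR^n$, on integer vectors $m$ with $\|m\|_\infty\le W$.

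Next I would replace $w^\star$ by bounded \emph{integer} weights valid for \emph{all} such $m$, via the classical linear‑programming argument behind Muroga/Chow‑type bounds on threshold functions. The set $S=\{m\in\Z^n:\|m\|_\infty\le W,\ \langle m,w^\star\rangle>0\}$ is finite, so $\langle m,w^\star\rangle\ge\mu$ on $S$ for some $\mu>0$; hence the polyhedron $\{w:\langle m,w\rangle\ge1\ \forall m\in S\}$ is nonempty (it contains $w^\star/\mu$), and since every $e_b$ lies in $S$ it is pointed and therefore has a vertex $w^\circ$. By Cramer's rule $w^\circ=M^{-1}\mathbf 1$ for an invertible integer matrix $M$ with entries of absolute value $\le W$, so $\hat w:=|\det M|\,w^\circ$ is an integer vector with $\|\hat w\|_\infty\le n!\,W^{\,n}=2^{O(n\log n+n\log W)}=2^{O(k\log k)}$, i.e.\ $\poly(k)$ bits. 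For every nonzero $m$ with $\|m\|_\infty\le W$ one then checks $\operatorname{sign}\langle m,\hat w\rangle=\operatorname{sign}\langle m,w^\star\rangle$: if $m\in S$ then $\langle m,\hat w\rangle=|\det M|\langle m,w^\circ\rangle\ge|\det M|\ge1>0$; the case $-m\in S$ is symmetric; and linear independence of the $\sqrt b$ leaves no third possibility.

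Finally I would assemble the $\ppoly$ algorithm. Using the polynomial relation between input length and $k$, the advice for a given length packages $\hat w$ (for each relevant $k$); it has $\poly(k)$ bits. On input $(a_i,\delta_i)$, compute the $b_i,c_i$ and the signature $m$ in polynomial time; answer NO if $m=0$, and otherwise output ``$\langle m,\hat w\rangle>0$'', a comparison of $\poly(k)$‑bit integers. Correctness is exactly the previous step, so $\ussr\in\ppoly$. The genuine obstacle is not the construction but its nonconstructiveness: producing a valid $\hat w$ seems to require deciding the signs of exponentially many sum‑of‑square‑roots instances — essentially solving $\ssr$ — which is why the bound is $\ppoly$ rather than $\Pt$. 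The same picture shows why the approach does not reach general $\ssr$: for $a_i$ as large as $2^k$ the number of relevant squarefree parts, hence $n$ and the bit length of $\hat w$, becomes exponential.
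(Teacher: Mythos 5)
Your proposal is correct and follows essentially the same route as the paper's first proof: reduce to a signed integer combination of square roots of square-free numbers bounded by $k$ (the paper's {\uussr} normal form), then apply the Muroga-style linear-programming/vertex argument to replace the real weights $\sqrt{b}$ by integer weights of $\poly(k)$ bits, which serve as the advice. Your homogeneous formulation and the explicit pointedness check are in fact a slightly cleaner rendering of the same argument than the paper's own statement of the LTF step.
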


We provide two proofs of Theorem~\ref{thm:ussrppoly}, both yielding a $\Pt/\poly$ (in fact even a $\TCz/\poly$) upper bound, which is tight modulo the non-uniformity since checking iterated addition of integers is $\TCz$-hard. This is a strong indication that {\ussr} should be decidable in {\Pt}. One consquence\footnote{However this consequence can be inferred by other means too: since {\ssr} is closed under complement, if it was also \NP-complete, this would imply {\NP} = {\coNP}, collapsing the polynomial hierarchy. } of our result is that {\ussr} cannot be \NP-complete unless the polynomial hierarchy collapses to the second level.
We would like to point out here that {\ussr} is not a \emph{sparse}\footnote{If it were, then it would be trivially in {\ppoly}} language: For every $k$ where the instances of {\ussr} have $k$ positive integers from the range $\{1, \dots k\}$, there could be potentially $2^{k}$ distinct instances of $\ussr$. Amazingly, there is still a $k^{{\cal O}(1)}$ -length \emph{advice string} that non-trivially helps solve all the \emph{exponentially-many instances}. To the best of our knowledge, this is the first instance of a natural (non-sparse) language that can be solved in \ppoly, but not known to be solvable in the \emph{polynomial hierarchy} (\PH).


%
%
%





\subsection*{Related work}
Cheng~\cite{cheng.comparing} studied the complexity of {\ussr} and gave a subexponential time algorithm when the integers in the instance are bounded by ${\cal O}(k \log{k})$ in magnitude. 
Kayal and Saha~\cite{kayal-saha} define an analogue of {\ssr} for univariate polynomials and obtain an efficient algorithm for deciding this variant. As a byproduct, they are able to show that {\ssr} can be solved in {\Pt} for the closely related class of \emph{polynomial integers}. The variant they solve is orthogonal to {\ussr} since there can be numbers of small magnitude that are not polynomial integers and vice versa. Both {\ssr} and {\ussr} have been conjectured~\cite{cheng.comparing} to admit efficient algorithms, though there has not been any concrete complexity-theoretic evidence before our work. A natural related question, namely whether a sum of square roots can be checked for equality, i.e., $\sum_{i=1}^k \delta_i \sqrt{a_i} = 0$ (as against {\ssr} where one wants to infer the sign of the expression) can be solved efficiently in {\bf P}~\cite{blomer}.






%
\section{Preliminaries}
\label{sec:prelim}

Here we recall a few relevant propositionositions about polynomials and integers, which can be found in standard texts on Computer Algebra~\cite{vonzurgathen} and Algebraic Number Theory~\cite{stewart-tall}. For a polynomial $P(x)= \sum_{k=0}^n a_k x^k$ with integer coefficients, let $s = size(P):= \sum_{k=0}^n |a_k|$.

\begin{prop}\cite{mahler} \label{prop:rootsep}
 Let $P(x) = \sum_{k=0}^n a_k x^k = a_n \cdot \prod_{k=1}^n (x - \alpha_k)$ where $a_n \neq 0$ be an integer polynomial of size $s$ and degree $n$. Then,
 \[
 sep(P) := min_{\alpha_i \neq \alpha_j} |\alpha_i - \alpha_j| > \frac{\sqrt{3}}{n^{n/2+1}s^{n-1}}
 \]
\end{prop}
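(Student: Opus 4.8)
The plan is to reconstruct Mahler's argument behind this bound, which rests on three standard ingredients: (i) if $P\in\Z[x]$ is squarefree then its discriminant $\mathrm{disc}(P)=a_n^{2n-2}\prod_{i<j}(\alpha_i-\alpha_j)^2$ is a nonzero \emph{integer}, so $|\mathrm{disc}(P)|\ge 1$; (ii) Hadamard's inequality for determinants; and (iii) Landau's inequality $M(P)\le\|P\|_2$ for the Mahler measure $M(P):=|a_n|\prod_{i=1}^n\max(1,|\alpha_i|)$, which with $\|P\|_2\le\|P\|_1=s$ gives $M(P)\le s$. I would first reduce to the squarefree case: if $P$ has a repeated root, replace it by a primitive squarefree $\tilde P\in\Z[x]$ with the same root set (for instance the primitive part of $P/\gcd(P,P')$); then $sep(P)=sep(\tilde P)$, $\deg\tilde P\le n$, and since the Mahler measure is multiplicative and at least $1$ on nonzero integer polynomials, $1\le M(\tilde P)\le M(P)\le s$. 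As $t\mapsto t^{t/2+1}$ is increasing and $M(\tilde P)^{\deg\tilde P-1}\le M(\tilde P)^{n-1}\le s^{n-1}$, the squarefree bound for $\tilde P$ (phrased in terms of $M$, derived below) implies the stated bound for $P$. So assume henceforth that $P$ is squarefree and $n\ge 2$ (for $n\le 1$ the minimum is over the empty set).

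Next, set up the discriminant. Let $\alpha_1,\alpha_2$ be two distinct roots with $|\alpha_1-\alpha_2|=sep(P)$, chosen so that $|\alpha_1|\ge|\alpha_2|$, and set $Q:=\prod_{i<j,\,(i,j)\neq(1,2)}|\alpha_i-\alpha_j|$, so $\prod_{i<j}|\alpha_i-\alpha_j|=sep(P)\cdot Q$. Ingredient (i) gives
\[ 1\le|\mathrm{disc}(P)|=|a_n|^{2n-2}\,sep(P)^2\,Q^2, \qquad\text{hence}\qquad sep(P)\ge\frac{1}{|a_n|^{n-1}Q}. \]
It therefore suffices to establish the upper bound $|a_n|^{n-1}Q<n^{(n+2)/2}M(P)^{n-1}/\sqrt3$, since this yields $sep(P)>\sqrt3\,n^{-(n+2)/2}M(P)^{-(n-1)}\ge\sqrt3\,n^{-n/2-1}s^{-(n-1)}$ upon using $M(P)\le s$.

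The heart of the proof is to bound $Q$ by Hadamard's inequality applied to a modified Vandermonde matrix. Starting from $V=(\alpha_i^{j-1})_{i,j=1}^n$, subtract row $2$ from row $1$ and divide the new first row by $\alpha_1-\alpha_2$; using $\alpha_1^k-\alpha_2^k=(\alpha_1-\alpha_2)\sum_{l=0}^{k-1}\alpha_1^l\alpha_2^{k-1-l}$, the resulting matrix $W$ has first row $(0,h_0,h_1,\dots,h_{n-2})$ with $h_m:=\sum_{l=0}^m\alpha_1^l\alpha_2^{m-l}$, its remaining rows being the Vandermonde rows of $\alpha_2,\dots,\alpha_n$, and these operations give $|\det W|=Q$. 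With $\mu_i:=\max(1,|\alpha_i|)$ (so $M(P)=|a_n|\prod_i\mu_i$), each Vandermonde row of $\alpha_i$ has Euclidean norm at most $\sqrt n\,\mu_i^{n-1}$, while $|h_m|\le\sum_{l=0}^m\mu_1^l\mu_2^{m-l}\le(m+1)\mu_1^{m}$ since $\mu_1\ge\mu_2$, so the first row of $W$ has squared norm at most $\sum_{m=0}^{n-2}(m+1)^2\mu_1^{2m}\le\mu_1^{2(n-2)}\cdot\frac{(n-1)n(2n-1)}{6}<\mu_1^{2(n-2)}\cdot\frac{n^3}{3}$. Hadamard's inequality then gives
\[ Q=|\det W|<\frac{n^{3/2}}{\sqrt3}\,\mu_1^{n-2}\cdot\prod_{i=2}^n\sqrt n\,\mu_i^{n-1}=\frac{n^{(n+2)/2}}{\sqrt3}\,\mu_1^{n-2}\prod_{i=2}^n\mu_i^{n-1}\le\frac{n^{(n+2)/2}}{\sqrt3}\prod_{i=1}^n\mu_i^{n-1}, \]
the last inequality because $\mu_1\ge 1$; multiplying by $|a_n|^{n-1}$ gives precisely the bound needed above.

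The main obstacle I anticipate is the asymmetry in the final step: the modified first row of $W$ contributes only $\mu_1^{n-2}$, not $\mu_1^{n-1}$, and this deficit can be absorbed into $\prod_i\mu_i^{n-1}$ only because the root of larger modulus is the one eliminated from the Vandermonde part — choosing the other root would break the estimate. The second delicate point is the constant $\sqrt3$, which forces the exact evaluation $\sum_{m=0}^{n-2}(m+1)^2=\frac{(n-1)n(2n-1)}{6}$ together with the observation that this is $<n^3/3$ for all $n\ge 1$, rather than a cruder bound such as $(m+1)^2\le n^2$. Everything else — integrality of the discriminant, Hadamard's inequality, Landau's inequality, and the squarefree reduction — is routine.
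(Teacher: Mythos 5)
The paper does not prove Proposition~\ref{prop:rootsep}; it is stated as an imported fact with a citation to Mahler, so there is no in-paper proof to compare against. Your reconstruction is correct and is, in substance, Mahler's original 1964 argument: integrality of the discriminant of a squarefree integer polynomial, the Hadamard bound on a modified Vandermonde determinant, and Landau's inequality $M(P)\le\|P\|_2\le s$. I checked the two steps you flagged as delicate and they hold: $\sum_{m=0}^{n-2}(m+1)^2=\tfrac{(n-1)n(2n-1)}{6}<\tfrac{n^3}{3}$ gives precisely the constant $\sqrt3$, and the ``deficit'' row norm $\|r_1\|_2<\tfrac{n^{3/2}}{\sqrt3}\mu_1^{n-2}$ is absorbed via $\mu_1\ge1$ only because you retained the \emph{smaller}-modulus root $\alpha_2$ among the unmodified Vandermonde rows and pushed $\alpha_1$ into the divided-difference row --- with the opposite labelling the estimate would indeed break. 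The squarefree reduction is also sound: the radical $\tilde P$ divides $P$ in $\Z[x]$ (Gauss), so multiplicativity of $M$ together with $M\ge1$ on nonzero integer polynomials gives $1\le M(\tilde P)\le M(P)\le s$, and monotonicity of $t\mapsto t^{t/2+1}$ transfers the bound from $\deg\tilde P$ to $n$. No gap.
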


Earlier we encountered the linear combination of square roots of primes as a hard instance of {\ussr}. Infact, this observation is true also for linear combinations of square roots of square-free numbers. First we define square-free numbers and note that they are abundant.

\begin{prop}[Distribution of Square-free numbers~\cite{shapiro}]\label{prop:sqfree}
	A square-free number is a positive integer that is not divisible any perfect 
	square integer other than $1$.
	Let $m(k)$ denote the number of square-free numbers in the set $\{1, \dots, k\}$. Then $m(k)$ grows asymptotically as ${\Theta}(k)$.
\end{prop}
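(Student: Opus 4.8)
The plan is to establish the two bounds $m(k) = O(k)$ and $m(k) = \Omega(k)$ separately; only the latter requires any work. The upper bound $m(k) \le k$ is immediate, since the set $\{1, \dots, k\}$ has exactly $k$ elements.

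For the lower bound I would count the complement. An integer $n \le k$ fails to be square-free precisely when $p^2 \mid n$ for some prime $p$. A union bound over primes then gives that the number of non-square-free integers in $\{1, \dots, k\}$ is at most
\[
\sum_{p \text{ prime}} \left\lfloor \frac{k}{p^2} \right\rfloor \le k \sum_{p \text{ prime}} \frac{1}{p^2} < k \sum_{n \ge 2} \frac{1}{n^2} = k\left(\frac{\pi^2}{6} - 1\right).
\]
Since $\pi^2/6 - 1 = 0.6449\ldots < 1$, subtracting from $k$ yields
\[
m(k) \ge k\left(2 - \frac{\pi^2}{6}\right) = \Omega(k),
\]
and together with $m(k) \le k$ this gives $m(k) = \Theta(k)$, which is all the statement asks for.

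If one wants the sharp density rather than merely $\Theta(k)$, the same idea refined via Möbius inversion works: the indicator of square-freeness of $n$ is $\sum_{d^2 \mid n} \mu(d)$, so interchanging summation gives $m(k) = \sum_{d \le \sqrt{k}} \mu(d) \lfloor k/d^2 \rfloor = k \sum_{d \ge 1} \mu(d)/d^2 + O(\sqrt{k}) = \tfrac{6}{\pi^2} k + O(\sqrt{k})$, using $\sum_{d \ge 1} \mu(d)/d^2 = 1/\zeta(2) = 6/\pi^2$. I do not anticipate any genuine obstacle; the only point worth being careful about is that the tail $\sum_{p} 1/p^2$ (equivalently $\sum_{n \ge 2} 1/n^2$) is bounded strictly below $1$, which is exactly what lets the elementary union-bound argument already deliver a positive density of square-free numbers.
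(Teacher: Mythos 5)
Your proof is correct. The paper itself gives no proof of this proposition --- it is stated as a known fact with a citation to Shapiro's number theory text --- so there is nothing to compare against; your elementary union bound over prime squares is a standard and fully rigorous way to get $m(k) = \Theta(k)$, the only point that matters being $\sum_{n \ge 2} 1/n^2 = \pi^2/6 - 1 < 1$ (and the overcounting in the union bound only helps, since it inflates the complement). The M\"obius-inversion refinement you sketch correctly recovers the sharp density $6/\pi^2$, which is more than the proposition requires.
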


Since $m = m(k)$ and $k$ have the same growth rate, in the rest of the paper, we will mildly abuse notation and use them interchangeably. The following important propositionerty of square roots of square free numbers will be an important component of our $\Pt/\poly$ upper bound.

\begin{prop}\label{prop:lindep}
	Let $\{s_1, \dots, s_m\} \subseteq \N$ be square-free positive integers. Then $\{\sqrt{s_1}, \dots, \sqrt{s_m}\}$ are linearly independent over the rationals. That is, if for some $c_1, \dots, c_m \in \mathbb{Q}$, $\sum_{i=1}^m c_i \sqrt{s_i} = 0$ then we have $c_1 = \dots = c_m = 0$. 
\end{prop}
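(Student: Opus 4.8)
The plan is to prove that square roots of distinct square-free integers are linearly independent over $\QQ$ by induction on $m$, exploiting the structure of the field extensions $\QQ(\sqrt{s_1}, \dots, \sqrt{s_m})$. The base case $m=1$ is immediate: $\sqrt{s_1}$ is irrational since $s_1$ is square-free and greater than $1$ (if $s_1 = 1$ then the ``square-free numbers'' are distinct so there is at most one such index, and the statement is trivial). For the inductive step, suppose the claim holds for any set of fewer than $m$ distinct square-free integers, and suppose $\sum_{i=1}^m c_i \sqrt{s_i} = 0$ with not all $c_i$ zero. After discarding zero coefficients we may assume all $c_i \neq 0$, and by clearing denominators we may take $c_i \in \Z$.

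\medskip

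The key idea is a conjugation (Galois-theoretic) argument. Consider the field $K = \QQ(\sqrt{s_1}, \dots, \sqrt{s_m})$; it is a Galois extension of $\QQ$ with Galois group a subgroup of $(\Z/2\Z)^m$, since $K$ is the splitting field of $\prod_i (x^2 - s_i)$ and each automorphism sends each $\sqrt{s_i}$ to $\pm\sqrt{s_i}$. The plan is to produce an automorphism $\sigma$ of $K$ that fixes, say, $\sqrt{s_1}$ but sends $\sqrt{s_m}$ to $-\sqrt{s_m}$; applying $\sigma$ to the relation $\sum_i c_i \sqrt{s_i} = 0$ and adding or subtracting from the original relation then kills some terms and keeps others, producing a shorter nontrivial dependence among a proper subset of the $\sqrt{s_i}$ — contradicting the inductive hypothesis. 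The one case this maneuver does not immediately dispatch is when \emph{every} automorphism that fixes $\sqrt{s_1}$ also fixes $\sqrt{s_m}$ for all the indices appearing, i.e.\ when $\sqrt{s_m} \in \QQ(\sqrt{s_1})$, say $\sqrt{s_m} = a + b\sqrt{s_1}$ with $a, b \in \QQ$; squaring gives $s_m = a^2 + b^2 s_1 + 2ab\sqrt{s_1}$, forcing $ab = 0$, and since $s_m \notin \QQ^2$ we get $a = 0$ and $s_m = b^2 s_1$, which contradicts $s_1, s_m$ being distinct and square-free (a standard prime-factorization argument: $s_m/s_1 = b^2$ with $s_1, s_m$ square-free forces $s_1 = s_m$).

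\medskip

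I expect the main obstacle to be making the Galois-theoretic step fully rigorous without circularity — in particular, verifying that the relevant automorphisms actually exist, i.e.\ that for a square-free $s$ not in $\QQ(\sqrt{s_1}, \dots, \sqrt{s_{j}})$ there is an automorphism of the compositum fixing the smaller field and negating $\sqrt{s}$. This is exactly the assertion that $[\QQ(\sqrt{s_1}, \dots, \sqrt{s_j}, \sqrt{s}) : \QQ(\sqrt{s_1}, \dots, \sqrt{s_j})] = 2$, which one proves by an elementary (2-adic valuation / prime-factorization) descent argument parallel to the one above rather than by invoking deep Galois theory. An alternative, more self-contained route avoiding Galois theory altogether is a direct induction: from $\sum_{i=1}^m c_i \sqrt{s_i} = 0$ isolate $\sqrt{s_m}$, so $\sqrt{s_m} \in \QQ(\sqrt{s_1}, \dots, \sqrt{s_{m-1}})$; then repeatedly use that for square-free $s$, $\sqrt{s} \in \QQ(\sqrt{t_1}, \dots, \sqrt{t_r})$ with square-free $t_j$ implies $s$ is (up to squares) a product of a subset of the $t_j$, contradicting square-freeness and distinctness unless the dependence was trivial. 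Either way, since this is a classical fact (see~\cite{stewart-tall}), the write-up can be kept brief, citing the standard reference and sketching the conjugation argument.
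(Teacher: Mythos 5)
Your argument is correct, and it matches what the paper intends: the paper states this proposition without proof as the classical Besicovitch/Galois-theory fact (compare its citation of \cite{stewart-tall, besi} for the companion degree statement), and your conjugation argument --- find an automorphism of the splitting field fixing $\sqrt{s_1}$ and negating $\sqrt{s_m}$, then add the two relations to obtain a shorter nontrivial dependence --- is exactly the standard proof, with the one delicate point (existence of such an automorphism, i.e.\ $\sqrt{s_m}\notin\QQ(\sqrt{s_1})$) correctly reduced to the elementary impossibility of $s_m=b^2 s_1$ for distinct square-free integers. The only cosmetic slip is in the base case, where all that is needed is $\sqrt{s_1}\neq 0$ rather than irrationality (irrationality of the non-unit $\sqrt{s_i}$ only becomes relevant once the constant term $s_i=1$ can appear alongside another term), but this does not affect correctness.
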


Moreover, the set of square-free numbers generate a field of large extension degree over the rationals. 

\begin{prop}[\cite{stewart-tall, besi}]\label{prop:degree}
 Let $\{s_1, \dots, s_m\} \subseteq \N$ be square-free positive integers. Let $\alpha = \sum_{i=1}^mc_i \sqrt{s_i}$ where $c_i \in \mathbb{Z}$ and $p(x) \in \mathbb{Z}[x]$ be a polynomial such that $p(\alpha) = 0$. Then, $p(x)$ must have degree at least $2^m$. 
\end{prop}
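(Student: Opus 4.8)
The plan is to pin down the degree of $\alpha$ over $\mathbb{Q}$ exactly and then read off the bound. Set $L=\mathbb{Q}(\sqrt{s_1},\dots,\sqrt{s_m})$. Since $L$ is the splitting field of $\prod_{i=1}^m(x^2-s_i)$ over $\mathbb{Q}$ it is Galois, and any $\sigma\in\mathrm{Gal}(L/\mathbb{Q})$ is determined by the signs $\sigma(\sqrt{s_i})/\sqrt{s_i}\in\{-1,+1\}$, so $\mathrm{Gal}(L/\mathbb{Q})$ embeds in $(\mathbb{Z}/2\mathbb{Z})^m$. The proof then has two steps: (i) show this embedding is onto, i.e. $[L:\mathbb{Q}]=2^m$; and (ii) show that $\alpha$, which we may and must assume has all $c_i\neq 0$, is a primitive element of $L$, i.e. $\mathbb{Q}(\alpha)=L$. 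Granting both, $[\mathbb{Q}(\alpha):\mathbb{Q}]=[L:\mathbb{Q}]=2^m$, so the minimal polynomial of $\alpha$ over $\mathbb{Q}$ has degree exactly $2^m$; since any $p(x)\in\mathbb{Z}[x]$ with $p(\alpha)=0$ is a $\mathbb{Q}[x]$-multiple of this minimal polynomial, we get $\deg p\geq 2^m$ (unless $p\equiv 0$), which is the claim.

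Step (ii) I would handle directly using the linear independence already recorded. Suppose $\sigma\in\mathrm{Gal}(L/\mathbb{Q})$ fixes $\alpha$ and let $S=\{\,i:\sigma(\sqrt{s_i})=-\sqrt{s_i}\,\}$. Then $0=\sigma(\alpha)-\alpha=-2\sum_{i\in S}c_i\sqrt{s_i}$, and by Proposition~\ref{prop:lindep} the $\sqrt{s_i}$ are $\mathbb{Q}$-linearly independent, so with every $c_i\neq 0$ this forces $S=\emptyset$, i.e. $\sigma=\mathrm{id}$. Hence $\mathrm{Gal}(L/\mathbb{Q}(\alpha))$ is trivial and $\mathbb{Q}(\alpha)=L$ by the Galois correspondence. (If some $c_i=0$ the conclusion genuinely fails, since then $\alpha$ lies in a proper subfield; likewise the bare hypothesis ``square-free'' has to be read as giving the $s_i$ multiplicatively independent classes in $\mathbb{Q}^\times/(\mathbb{Q}^\times)^2$ — automatic when the $s_i$ are distinct primes, as in the hard {\ussr} instances of the introduction — since e.g. $2,3,6$ are pairwise distinct square-frees yet $2\cdot 3\cdot 6$ is a square. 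This is exactly where that hypothesis must enter, namely in step (i).)

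The real content, and where I expect the main obstacle, is step (i): $[L:\mathbb{Q}]=2^m$. I would prove it by induction on $m$, the inductive step being $\sqrt{s_{m+1}}\notin\mathbb{Q}(\sqrt{s_1},\dots,\sqrt{s_m})$. Assuming for contradiction an expansion $\sqrt{s_{m+1}}=\sum_{T\subseteq\{1,\dots,m\}}q_T\prod_{i\in T}\sqrt{s_i}$ with $q_T\in\mathbb{Q}$, squaring and again invoking Proposition~\ref{prop:lindep} (after reducing each $\prod_{i\in T}s_i$ to its square-free part) collapses the identity, and the crux becomes the purely number-theoretic fact that the square-free parts of the $2^m$ products $\prod_{i\in T}s_i$ are pairwise distinct — equivalently, no non-empty subproduct $\prod_{i\in T}s_i$ is a perfect rational square — which follows from unique factorization together with the $\mathbb{F}_2$-independence noted above (some prime divides $\prod_{i\in T}s_i$ to an odd power). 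This step may also simply be quoted from the standard Kummer-theoretic computation of multiquadratic extensions in~\cite{stewart-tall}. Combining steps (i) and (ii) gives $[\mathbb{Q}(\alpha):\mathbb{Q}]=2^m$ and hence the asserted degree bound.
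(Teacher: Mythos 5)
The paper does not actually prove this proposition; it is stated as a citation to \cite{stewart-tall, besi} (Besicovitch's theorem on multiquadratic extensions), so there is no in-paper argument to compare against. Your proof is the standard one underlying those references and it is correct: linear disjointness of the quadratic extensions gives $[L:\mathbb{Q}]=2^m$, the Galois-conjugate computation shows $\sigma(\alpha)=\alpha$ forces $\sigma=\mathrm{id}$ via Proposition~\ref{prop:lindep}, hence $\mathbb{Q}(\alpha)=L$ and the minimal polynomial has degree exactly $2^m$. Two remarks. First, your parenthetical caveats are not pedantry but genuine corrections to the statement as written: with $s_1=2$, $s_2=3$, $s_3=6$ (pairwise distinct square-free integers) one has $[L:\mathbb{Q}]=4<2^3$, so the hypothesis really must be that the $s_i$ are independent in $\mathbb{Q}^\times/(\mathbb{Q}^\times)^2$ (as with the distinct primes used in the paper's hard instances), and one must assume all $c_i\neq 0$ and $p\not\equiv 0$; it is worth recording that the proposition needs this reading for the paper's use of it to be sound. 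Second, your step (i) can be shortcut: once one knows that the square-free parts of the $2^m$ subproducts $\prod_{i\in T}s_i$ are pairwise distinct (your ``crux,'' which follows from unique factorization and the $\mathbb{F}_2$-independence), Proposition~\ref{prop:lindep} applied to those $2^m$ square roots directly yields $2^m$ elements of $L$ that are $\mathbb{Q}$-linearly independent, so $[L:\mathbb{Q}]\geq 2^m$ without the induction on $m$ or the squaring manipulation.
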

 
\section{Non-uniform algorithms for USSR}

We present two fundamentally different proofs of the {$\ppoly$} upper bound for {\ussr}.  Both the proofs require a crucial preprocessing step to obtain a \emph{normal form} for {\ussr} that we call  \emph{Universal} {\ussr} ({\uussr}) consisting of ``small" linear combinations of square roots of square-free integers. Formally, {\uussr} is the following computational problem: Given $(\Delta, U)$ where $\Delta \in [-k^{2}, k^{2}]^{m+1}$ and $U = \{\sqrt{s_0} = 1 ,  \sqrt{s_1}, \dots, \sqrt{s_m}\}$, check if $\sum_{i=0}^m \Delta_i \sqrt{s_i} > 0$. We saw in Proposition~\ref{prop:degree} that square roots of square-free integers are a hard instance of {\ussr}. We now observe that they are in fact the hardest subclass of {\ussr}: That is, in order to solve {\ussr}, it suffices to estimate the sign of  linear combinations of square roots of square-free integers. We have the following



 \begin{prop}\label{prop:reduction}
For every $k \in \mathbb{N}$, given a {\ussr} instance $(\delta, A)$, with $\delta \in \{\pm 1\}^k,  A = \{a_1, \dots, a_k\}$ and $1 \leq a_i \leq k$, there exists a {\uussr} instance computable in $\poly(k)$-time, $(\Delta, U)$ where $\Delta \in [-k^{2}, k^{2}]^{m+1}$ and $U = \{\sqrt{s_0} = 1 ,  \sqrt{s_1}, \dots, \sqrt{s_m}\}$,  and $s_1, \dots, s_m$ are the distinct square-free integers smaller than $k$  such that, 
$$\sum_{i=1}^k \delta_i\sqrt{a_i} > 0 \mbox{ \ if and only if \ } \sum_{j=0}^m \Delta_j \sqrt{s_j}  > 0$$.
\end{prop}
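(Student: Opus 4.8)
The plan is to exploit the unique decomposition of a positive integer into a perfect square times a square-free integer, to regroup the terms of the sum accordingly, and then to observe that the two expressions in the statement denote the \emph{same} real number, so that the claimed equivalence is immediate once the coefficient bound and the running time are verified.

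Concretely, I would first recall that every positive integer $a$ can be written uniquely as $a = b^2 s$ with $b \in \N$ and $s$ square-free; since each $a_i \le k$, this factorization is computable by trial division in $\poly(k)$ time, and it yields $\sqrt{a_i} = b_i \sqrt{s_i}$ with $b_i \le \sqrt{a_i} \le \sqrt{k}$ and $s_i \le k$ square-free. Next, let $s_0 = 1, s_1, \ldots, s_m$ enumerate the square-free integers in $\{1, \ldots, k\}$; by Proposition~\ref{prop:sqfree} this list has length $m = \Theta(k)$, and, crucially, it depends only on $k$ and not on the particular instance — this is exactly what makes the target problem ``universal''. For each $j \in \{0, \ldots, m\}$ define $\Delta_j := \sum_{i \,:\, s_i = s_j} \delta_i b_i$, an empty sum being $0$. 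Then $\sum_{i=1}^k \delta_i \sqrt{a_i} = \sum_{i=1}^k \delta_i b_i \sqrt{s_i} = \sum_{j=0}^m \Delta_j \sqrt{s_j}$, where the $j=0$ term collects precisely the perfect squares among the $a_i$ and contributes the rational number $\Delta_0 = \Delta_0 \sqrt{s_0}$.

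The coefficient bound is then routine: at most $k$ summands contribute to any single $\Delta_j$, and each contributes $|\delta_i b_i| \le \sqrt{k}$, so $|\Delta_j| \le k^{3/2} \le k^2$, i.e.\ $\Delta \in [-k^2, k^2]^{m+1}$ as required; and every step above runs in $\poly(k)$ time. Finally, since the left- and right-hand sides are literally equal as real numbers, $\sum_{i=1}^k \delta_i \sqrt{a_i} > 0$ holds if and only if $\sum_{j=0}^m \Delta_j \sqrt{s_j} > 0$, which is the asserted equivalence.

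As for where the difficulty lies: there is essentially no analytic content here. The reduction is pure arithmetic bookkeeping, and the only points to check are (i) that the square-free decomposition is polynomial-time when the radicands are bounded by $k$, (ii) the crude coefficient bound, and (iii) that the universal radical set $U$ genuinely depends on $k$ alone. The real work of deciding the sign is entirely deferred to solving {\uussr}; this proposition merely records that {\ussr} and {\uussr} are equivalent under the stated size constraints, so that both of the later $\Pt/\poly$ algorithms may assume the normal form.
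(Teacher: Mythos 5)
Your proposal is correct and follows essentially the same route as the paper: decompose each $a_i$ into a square times a square-free part, aggregate coefficients by square-free part (perfect squares landing in $\Delta_0$), and note the two sums are equal as real numbers with the crude bound $|\Delta_j| \le k^{3/2} \le k^2$. If anything, your version is slightly cleaner, since summing $\Delta_j = \sum_{i:\, s_i = s_j}\delta_i b_i$ handles repeated radicands and coinciding square-free parts explicitly, whereas the paper's three-way partition of $[k]$ into square-free numbers, perfect squares, and the rest glosses over that aggregation.
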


\begin{proof}

 We design as an intermediate step the {\ussr} instance $(\Gamma, [k])$ which consists of all the integers between $1$ and $k$, i.e., $\sum_{\ell=1}^{k}\Gamma_{\ell} \sqrt{\ell}$. Now we can partition $[k]$ as 
 $$[k] = S \sqcup P \sqcup R$$
 where
 \begin{itemize}
     \item $S = \{s_1, \dots, s_m\}$ are the square-free numbers.
     \item $P$ consists of perfect squares.
     \item $R$ consists of the rest, i.e., $R = [k] \setminus (S \cup P)$. 
 \end{itemize}
 
  The sum $\sum_{\ell=1}^{k}\Gamma_{\ell} \sqrt{\ell}$  can be rewritten as
  \begin{eqnarray*}
   \sum_{\ell=1}^{k}\Gamma_{\ell} \sqrt{\ell} &=& \sum_{s \in S} \Gamma_s \sqrt{s} + \sum_{p \in P} \Gamma_p \sqrt{p} + \sum_{r \in R} \Gamma_r \sqrt{r} \\
  \end{eqnarray*}
  where we set $\Gamma_s, \Gamma_p, \Gamma_r$ to be $\{0, \pm 1\}$ according to the given {\ussr} instance $(\delta, A)$ as follows. If the number $\ell \in A$, then there must be a corresponding sign for this number in the $\delta$ vector, and we set $\Gamma_{\ell}$ to be this sign. Otherwise $\ell \notin A$ and we set $\Gamma_{\ell} = 0$. To prove the claimed bounds on $\Delta$ we proceed as follows.
  
  \begin{itemize}
      \item Since there will be roughly $\lfloor \sqrt{k} \rfloor$ perfect squares between $1$ and $k$, the contribution of $\sum_{p \in P} \Gamma_p \sqrt{p}$ will be an integer $\Delta_0$, where $|\Delta_0| \leq k^{3/2} \leq k^{2}$.
      
      \item Any number that is neither square free nor a perfect square  can be decomposed in to a square free part and a squared part. For example, if $a_i = c_i^2 s_i$ then we can write $\sqrt{a_i} = c_i \sqrt{s_i}$ where $c_i \leq \sqrt{k}$. Therefore $\sum_{r \in R} \Gamma_r \sqrt{r}$ can be rewritten as $\sum_{s \in S} \Gamma_s' \sqrt{s}$ where $\Gamma_s' \leq \sqrt{k}$. 
   \end{itemize}

   Therefore we can rewrite the {\ussr} instance  $\sum_{\ell=1}^{k}\Gamma_{\ell} \sqrt{\ell}$ as a {\uussr} instance $\sum_{j=0}^m \Delta_j \sqrt{s_j}$ where $\forall j \in \{0,1,\dots,m\}, |\Delta_j| \leq k^{2}$ as claimed. The reduction is clearly in polynomial (in $k$) time.	
\end{proof}

%



Henceforth, we will concern ourselves with only instances of {\uussr}: Given $k$, and $\Delta \in [-k^{2}, k^{2}]^m$, check if $\sum_{j=0}^m \Delta_j \sqrt{s_j} > 0$.  Note that by Proposition~\ref{prop:sqfree}, $m = {\cal O}(k)$. With the help of Proposition~\ref{prop:reduction}, we have encoded the input set of numbers $A$ in to the small linear combinations $\Delta$. We are now ready to give a proof of Theorem~\ref{thm:ussrppoly}

\subsection{Proof of Theorem~\ref{thm:ussrppoly} via Linear Threshold Functions}

A Boolean function $f: \mathbb{B}^m \to \{\pm 1\}$ over a bounded discrete domain $\mathbb{B}$ is said to be a Linear Threshold Function (LTF) if and only if \emph{there exist} real numbers $w_0, w_1, \dots, w_m \in \mathbb{R}$ such that for every $x \in \mathbb{B}^{m}$, $f(x) = 1$ iff $\sum_{i=1}^m w_i x_i \geq w_0$. LTFs are central objects of study in many areas of theoretical computer science such as circuit complexity, learning theory, analysis of Boolean functions and communication complexity~\cite{AB09}.

The real numbers $w_0, w_1, \dots, w_m$ are said to be a \emph{realization} of the LTF $f$. Note that any LTF has infinitely many realizations. The following classical theorem due to Muroga~\cite{muroga} says that any LTF over a \emph{bounded discrete domain}  $\mathbb{B}$, can also be realized with integer weights of \emph{small}\footnote{It is easy to see that there always exists \emph{some} integer realization for any LTF. Indeed, consider $\varepsilon = min_{x \in \mathbb{B}^m} \left(\sum_{i=1}^m w_ix_i-w_0\right)$, and for every $i \in [m]$, pick a rational number $w_i'$ in the interval $(w_i - \frac{\varepsilon}{10m}, w_i + \frac{\varepsilon}{10m})$. Clearly, for every $x \in \mathbb{B}^n$, $\sum_{i=1}^m w_ix_i - w_0 > 0$ if and only if $\sum_{i=1}^m w_i'x_i - w_0' > 0$. By clearing the denominators of the rationals, we obtain the integer realization. Notice however that the magnitude of $w_i'$ depend on $\varepsilon$} magnitude (i.e., representable using at most ${\cal O}(m\log{m})$-bits). 
 In fact the theorem shows that such a representation is only dependent on $m$ and the the maximum absolute value of any element in the discrete bounded domain $\mathbb{B}$. In particular it is independent of the real numbers $w_0, w_1, \dots, w_m$.

\begin{theorem}[Muroga \cite{muroga}]
\label{thm:muroga}
	Let $b$ be the maximum magnitude of elements in $\mathbb{B}$. For any LTF  $f: \mathbb{B}^m \to \{\pm 1\}$ realized using weights $w_0, w_1, \dots, w_m \in \mathbb{R}$, there exists an equivalent integer realization with weights $u_0, u_1 \dots, u_m \in \mathbb{Z}$, where $1 \leq |u_0|, \dots, |u_m| \leq {\cal O}((m + 1)! b^m)$.
\end{theorem}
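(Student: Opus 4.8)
The plan is to recast the problem as finding a \emph{basic feasible solution} of an integer linear program, and then to read the weight bound off from Cramer's rule together with the crude Leibniz estimate for a determinant. I assume throughout that $\mathbb{B}\subseteq\mathbb{Z}$ with $|\mathbb{B}|\geq 2$, so in particular $b\geq 1$ (the cases $|\mathbb{B}|\leq 1$, or $f$ constant, are trivial and handled separately); this is the setting relevant to our application. First I would absorb the threshold into a constant coordinate: write $\tilde x=(1,x_1,\dots,x_m)\in\{1\}\times\mathbb{B}^m$, so that $f(x)=1$ iff $\langle u,\tilde x\rangle\geq 0$ for the weight vector $u=(-w_0,w_1,\dots,w_m)$. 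Given a real realization of $f$, for every $x$ with $f(x)=-1$ we have $\langle u,\tilde x\rangle<0$, and since $\mathbb{B}^m$ is finite there is a uniform negative gap; rescaling $u$ by a suitable constant, the given realization yields a point of the nonempty polyhedron
\[
{\cal P}=\bigl\{\,v\in\mathbb{R}^{m+1}\ :\ \langle v,\tilde x\rangle\geq 0\ \text{whenever}\ f(x)=1,\ \ \langle v,-\tilde x\rangle\geq 1\ \text{whenever}\ f(x)=-1\,\bigr\}.
\]
The constraint vectors here are the $\pm\tilde x$; they have integer entries of magnitude at most $b$, with the constant coordinate contributing $\pm 1$.

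Next I would extract a vertex of ${\cal P}$. The constraint vectors span $\mathbb{R}^{m+1}$: sign flips do not change a span, and among the $\tilde x$'s the differences obtained by changing a single coordinate of $x$ recover nonzero multiples of $e_1,\dots,e_m$, while any single $\tilde x$ supplies the remaining direction $e_0$. Hence ${\cal P}$ contains no affine line, and being nonempty it has a vertex $v^\ast$; equivalently $v^\ast$ is a basic feasible solution, so there is a linearly independent subsystem of $m+1$ of the constraints active at $v^\ast$, giving $Bv^\ast=c$, where $B$ is an $(m+1)\times(m+1)$ integer matrix each of whose rows is one of the vectors $\pm\tilde x$, the vector $c\in\{0,1\}^{m+1}$ records the corresponding right-hand sides, and $\det B\neq 0$.

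Now I would clear denominators. Since $\det B$ is a nonzero integer, $|\det B|\geq 1$, so $u:=|\det B|\,v^\ast$ lies in $\mathbb{Z}^{m+1}$; moreover scaling a feasible point up by a factor $\geq 1$ preserves every constraint $\langle\cdot,\tilde x\rangle\geq 0$ and $\langle\cdot,-\tilde x\rangle\geq 1$, so $u$ is an integer realization of $f$. By Cramer's rule $u_i=\pm\det(B^{(i)})$, where $B^{(i)}$ is $B$ with its $i$-th column replaced by $c$. Expanding $\det(B^{(i)})$ by the Leibniz formula over $S_{m+1}$: in each of the $(m+1)!$ terms the factor coming from column $i$ (entries in $\{0,1\}$) has magnitude $\leq 1$, the factor coming from column $0$ (entries in $\{-1,1\}$, in case $i\neq 0$) has magnitude $\leq 1$, and the remaining at most $m$ factors have magnitude $\leq b$; hence $|u_i|=|\det(B^{(i)})|\leq (m+1)!\,b^{m}$, which is the claimed ${\cal O}((m+1)!\,b^{m})$ bound. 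For the lower bound $1\leq|u_i|$ it suffices that $u$ have no vanishing coordinate, and this is arranged by a routine preliminary reduction: any coordinate $x_i$ on which $f$ does not depend can be deleted, the statement applied to the resulting LTF on fewer variables, and the deleted weights reinstated as $\pm 1$ after scaling the rest up by a factor $\Theta(b)$, which does not spoil the bound; on the coordinates that $f$ genuinely depends on, $w_i\neq 0$ in every real realization, hence $u_i\neq 0$.

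I expect the main obstacle to be exactly the set-up in the first two paragraphs — converting the given real realization into a \emph{bounded integer} linear program that genuinely possesses a vertex — together with the two accounting points it conceals: one must keep the threshold inside the ``free'' unit-entry column so that the determinant estimate comes out with $b^{m}$ rather than $b^{m+1}$, and one must argue separately that no weight vanishes in order to obtain the $1\leq|u_i|$ half of the statement. Once the problem has been phrased as ``find a basic feasible solution of an integer LP,'' the weight bound is immediate from Cramer's rule and the crude Leibniz estimate, requiring no heavier tool (Hadamard's inequality would also suffice but is unnecessary).
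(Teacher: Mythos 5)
Your proof is correct and follows essentially the same route as the paper's appendix: set up the separating linear program, pass to a vertex (basic feasible solution), and read the weight bound off Cramer's rule plus a crude determinant estimate. In fact your write-up is more careful than the paper's on the points it glosses over — the gap-and-rescale step needed to make the $\geq 1$ constraints feasible, the verification that the polyhedron is pointed so that a vertex actually exists, the bookkeeping of the threshold column that yields $b^{m}$ rather than $b^{m+1}$, and the lower bound $1\leq|u_i|$.
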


The proof of Theorem~\ref{thm:muroga} follows from a simple application of Linear Programming and is a fundamental result in Circuit Complexity and Learning Theory. We give an exposition of the proof in the Appendix~\ref{app: muroga} for the sake of completeness. The proof of Theorem~\ref{thm:ussrppoly} is now immediate; we simply consider the LTF which is realized using the weights $\Delta_0, \Delta_1\sqrt{s_1}, \dots, \Delta_m\sqrt{s_m}$ and observe that there is also an integer realization of this LTF that uses small \emph{integer} weights of magnitude at most $m^{{\cal O}(m)}$. 


\begin{proof}{(of Theorem~\ref{thm:ussrppoly})}

We invoke Theorem~\ref{thm:muroga} with $\mathbb{B} = \{-1, 0, 1\}$ and $w_0 = \Delta_0, w_1 = \Delta_1\sqrt{s_1}, \dots, w_m = \Delta_m\sqrt{s_m}$.
The LTF $f: {\mathbb{B}}^{m} \to \{\pm 1\}$ is given by	$f(x) = 1$ if and only if $\sum_{j=1}^m x_j|\Delta_j|\sqrt{s_j}  > \Delta_0$. By Theorem~\ref{thm:muroga}, there exist integers $u_0, \dots, u_m$ of magnitude at most ${\cal O}((m+1)!) = m^{{\cal O}(m)}$ such that for every $x \in \mathbb{B}^m$
$$ \sum_{j=1}^m x_j\Delta_j\sqrt{s_j}  > \Delta_0 \mbox{\ if and only if\ } \sum_{j=1}^m x_j u_j > u_0$$

The non-uniform advice for our algorithm will be exacly these $m+1$ numbers $u_0, \dots, u_m$ each of which is ${\cal O}(m \log{m})$ bits long. Since $m = {\cal O}(k)$ by Proposition~\ref{prop:sqfree}, for every $k$, the total length of the advice string is ${\cal O}(k^2 \log{k})$ bits. Given this advice,
to decide the sign of any given {\uussr} instance $\sum_{j=0}^m \Delta_j \sqrt{s_j}$, we can instead now check if $\sum_{j=1}^m u_j > u_0$. Since $u_0, \dots, u_m$ are integers of magnitude at most ${\cal O}((m+1)!) = 2^{{\cal O}(m \log m)}$, such a computation is just an iterated addition of ${\cal O}(m\log m)$-bit integers which can be performed in polynomial time. 
\end{proof}

\begin{remark}
A closer observation of the proofs of Proposition~\ref{prop:reduction} and Theorem~\ref{thm:muroga} shows that in fact {\ussr} $\in \TCz/\poly$. The reduction from {\ussr} to {\uussr} in Proposition~\ref{prop:reduction} can be implemented in $\TCz$: Since the numbers are in unary, checking membership in the sets $S, P$ and $R$ reduces to bruteforce division. Similarly checking if $\sum_{j=1}^m u_j > u_0$ is essentially an iterated addition of ${\cal O}(k\log{k})$-bit integers. Since division and iterated addition can be performed in $\TCz$~\cite{HAB}, this shows that both ${\uussr}$ and ${\ussr}$ are in $\TCz/\poly$. 

\end{remark}

\subsection{Proof of Theorem~\ref{thm:ussrppoly} via Numerical Approximations}

We now present a fundamentally different $\Pt/\poly$ upper bound proof which allows us to solve {\uussr} provided we know the \emph{approximate magnitudes} of ($m+1$) \emph{smallest} {\uussr} instances. The high-level idea behind the proof is follows: Given an instance of {\uussr}  as in Proposition~\ref{prop:reduction} we can interpret the {\ussr} problem as estimating the sign of the inner product of a vector in $\mathbb{B}^{m+1}$ with a \emph{fixed vector} in $\mathbb{R}^{m+1}$, namely the vector of square roots of square-free numbers. Next, observe that there exists a set of $m+1$ vectors in $\mathbb{B}^{m+1}$ which spans $\mathbb{R}^{m+1}$  Could it be possible that the sign of this inner product can be estimated from the sign/value of the inner products of the basis elements? While this is not possible for an arbitrary basis, we show that there exists a \emph{nice} basis which allows us to do precisely this. Therefore, if this basis along with the \emph{approximate values} of the inner product of the basis vectors is given as advice, we show that we can estimate the sign of {\uussr} instance.

First we set up some notation. As alluded to earlier, any {\uussr} instance can be interpreted as an inner product as follows: 
	\begin{eqnarray*}
	A = \sum_{j = 0}^m \Delta_j \sqrt{s_j}	= \langle \vec{\Delta}, \vec{S}	\rangle
	\end{eqnarray*}	
where $\vec{\Delta} = (\Delta_0, \dots, \Delta_m) \in [-k^{2}, k^{2}]^{m+1}$ and $\vec{S} = (1, \sqrt{s_1}, \dots, \sqrt{s_m}) \in \RR^{m+1}$. By Proposition~\ref{prop:lindep}, none of them evaluate to $0$ except the inner product with the zero vector. 
In what follows, we consider only those vectors $\vec{\Delta}$ which have a \emph{positive inner product} on $\vec{S}$, i.e., $W = \{\vec{\Delta} \in  [-k^{2}, k^{2}]^{m+1} \mid  \langle \vec{\Delta}, \vec{S}	\rangle > 0 \} \subseteq \RR^{m+1}$. Let $|W| = t$.  Let $\{v_i\}_{i=1}^t$ be the magnitude of the inner product of the elements of $W$. By our assumption we have $0 < v_0 v_1, \dots ,v_t$ and each of these numbers are distinct since if $v_i = v_j$ it gives rise to a non-trivial linear dependence between the square roots, which is impossible by Proposition~\ref{prop:lindep}.

We now construct the unique \emph{lightest} basis for $\RR^{m+1}$ from the vectors in $W$ by picking the least (ordered by magnitude of inner product) $m + 1$ linearly independent vectors in $[-k^{2}, k^{2}]^{m+1}$. It is clear that a basis for $\RR^{m+1}$ using these vectors exists and is unique (uniqueness follows from Proposition~\ref{prop:lindep}). We denote these vectors by $\base{0}, \base{1}, \dots, \base{m}$, where $\base{i} = (b_{i0}, \dots, b_{im}) \in [-k^2,k^2]^{m+1}$. Let $ v_0 = \langle \base{0}, \vec{S} \rangle, v_1 = \langle \base{1}, \vec{S} \rangle, \dots, v_m = \langle \base{m}, \vec{S} \rangle$. Given an instance of {\uussr} of the form $A = \langle \vec{\Delta}, \vec{S} \rangle$, it is clear that we can express $\vec{\Delta}$ in the lightest basis as $\vec{\Delta} = \sum_{i=0}^m c_i \base{i}$. The $c_i$ can be obtained by solving a system of linear equations over the rationals. This immediately gives an upper and lower bound on $c_i$: The maximum entry of the matrix of such a linear system is at most $k^{2}$ (which is also the maximum possible entry in any $\base{i}$); Therefore by Cramer's rule we get that for all $i, |c_i| \leq {\cal O}((m+1)!k^{2m}) = 2^{{\cal O}(m \log{km})}$. Since all the entries of the matrix are integers, if $c_i \neq 0$, we also have $|c_i| \geq \frac{1}{2^{{\cal O}(m \log{km})}}$. Now, we have
	$$A = \langle \sum_{i=0}^m c_i \base{i}, \vec{S} \rangle = \sum_{i = 0}^m c_i \langle \base{i}, \vec{S} \rangle = \sum_{i = 0}^m c_i v_i$$. 

\begin{remark}\label{rmk:large}
 Note that just by observing that $\frac{1}{2^{{\cal O}(m \log{km})}} \leq c_i \leq 2^{{\cal O}(m \log{km})}$ we can conclude that there exists $i \in \{0,1, \dots, m\}$ such that $|v_i| > \frac{1}{2^{{\cal O}(m \log{km})}}$. That is, all the basis vectors cannot have double exponentially small inner product with $\vec{S}$. Because, otherwise, they cannot express \emph{large} instance of {\ussr} which are guaranteed to exist, for e.g. $A = \sum_{i=0}^m \sqrt{s_i}$ whose magnitude is larger than $m$. 
\end{remark}
 
 Notice that expressing the {\uussr} instance in the lightest basis immediately yields the following simple lower bound.
 
\begin{prop}\label{prop:lower}
$$|A| = |\sum_{i=0}^m c_i\langle  \base{i}, \vec{S} \rangle| = |\sum_{i=0}^m c_i v_i| > v_{m_1}$$ 
where $m_1 \in [m]$ is the largest index such that $c_{m_1} \neq 0$.     
\end{prop}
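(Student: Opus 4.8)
The plan is to exploit the greedy, matroid-style construction of the lightest basis. By the way $\base{0},\base{1},\dots,\base{m}$ were selected, $\base{i}$ is the vector of \emph{smallest} inner-product magnitude among all vectors of $W \cap [-k^{2},k^{2}]^{m+1}$ that are linearly independent from $\base{0},\dots,\base{i-1}$. So to prove $|A| > v_{m_1}$ it suffices to produce, from the given {\uussr} instance $\vec{\Delta}$, a vector in $W \cap [-k^{2},k^{2}]^{m+1}$ whose inner product with $\vec{S}$ has magnitude $|A|$ and which lies outside $\mathrm{span}\{\base{0},\dots,\base{m_1-1}\}$; minimality of $\base{m_1}$ then forces $|A| \ge v_{m_1}$, and a separate distinctness argument upgrades this to a strict inequality.

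First I would identify $m_1$ and the candidate vector. Writing $\vec{\Delta} = \sum_{i=0}^{m} c_i \base{i}$ and letting $m_1$ be the largest index with $c_{m_1}\neq 0$, we have $\vec{\Delta} = \sum_{i=0}^{m_1} c_i \base{i}$ with $c_{m_1}\neq 0$; hence $\vec{\Delta}\notin \mathrm{span}\{\base{0},\dots,\base{m_1-1}\}$, so $\{\base{0},\dots,\base{m_1-1},\vec{\Delta}\}$ is linearly independent. Replacing $\vec{\Delta}$ by $-\vec{\Delta}$ if necessary (this stays in $[-k^{2},k^{2}]^{m+1}$, preserves linear independence from the $\base{i}$, and flips the sign of the inner product) I may assume $A = \langle \vec{\Delta},\vec{S}\rangle > 0$, so that $\vec{\Delta}\in W$.

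Now $\base{m_1}$ is, by the greedy construction of the lightest basis, the vector of minimum inner-product magnitude in $W \cap [-k^{2},k^{2}]^{m+1}$ that is independent from $\base{0},\dots,\base{m_1-1}$, and $\vec{\Delta}$ is one such vector, so $|A| = \langle \vec{\Delta},\vec{S}\rangle \ge \langle \base{m_1},\vec{S}\rangle = v_{m_1}$. To make this strict I invoke Proposition~\ref{prop:lindep}: two distinct integer vectors in $[-k^{2},k^{2}]^{m+1}$ cannot have equal inner product with $\vec{S}$, since their difference would give a nonzero rational linear dependence among $1,\sqrt{s_1},\dots,\sqrt{s_m}$. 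Hence $|A| = v_{m_1}$ is possible only when $\vec{\Delta} = \pm\base{m_1}$, i.e. the instance is (up to sign) one of the precomputed basis vectors, whose sign is known outright; in all other cases $|A| > v_{m_1}$, as claimed.

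The routine parts are the linear-algebra bookkeeping with the $c_i$. The only point that needs care is the precise statement "$\base{m_1}$ has minimum inner-product magnitude among all admissible vectors independent from $\base{0},\dots,\base{m_1-1}$", which is exactly the matroid-exchange property of the greedy selection of the lightest basis, combined with a clean treatment of the sign flip and of the equality edge case via Proposition~\ref{prop:lindep}.
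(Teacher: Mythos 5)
Your proof is correct and is essentially the paper's argument: the paper proves the bound by the same exchange observation, namely that if $|A| < v_{m_1}$ then $\vec{\Delta}$ (which is independent of $\base{0},\dots,\base{m_1-1}$ since $c_{m_1}\neq 0$) would have been selected in place of $\base{m_1}$ in the greedy construction of the lightest basis. You additionally spell out the sign normalization and the strictness via Proposition~\ref{prop:lindep}, correctly isolating the edge case $\vec{\Delta}=\pm\base{m_1}$, which the paper's one-line proof omits but which its algorithm handles separately before invoking the proposition.
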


 \begin{proof}
 If $|A| < v_{m_1}$, then by definition (of the lightest basis) the basis vector $\base{m_1}$ can be replaced by $\vec{\Delta}$.   
 \end{proof}
	
From the preceeding discussion, it seems like in order to find the sign of the quantity $A$, it suffices to know the magnitude and sign of all the $v_i$. The best bounds for $v_i$ are obtained through root separation bounds (Proposition~\ref{prop:rootsep}), which turn out to be doubly exponentially small as a function of $m$ in magnitude (Proposition~\ref{prop:degree}). Therefore, they need exponential (in $m$) number of bits to express them unambiguously and any arithmetic with such numbers is bound to result in an exponential time algorithm! 

However the lower bound from Proposition~\ref{prop:lower} gives us some leeway to use approximations to $v_i$ instead of $v_i$ themselves. We will now show that if we have a suitable \emph{rational approximation} $\widetilde{v}_i$ to the $v_i$'s the resulting error is small. 
Towards this we now assume we work with $\widetilde{v}_i = \beta_i 2^{-e_i}$, where $\beta_i$ is a ${\cal O}(m^2)$-bit rational number in the interval $(1,2)$. Notice that by standard root separation bounds (see Proposition~\ref{prop:rootsep} and ~\ref{prop:degree}), one can infer that $|e_i| \leq 2^{{\cal O}(m)}$ and hence expressible using ${\cal O}(m)$ bits. The $\{\widetilde{v}_i = \beta_i2^{-e_i}\}_{i=0}^m$ are approximations to $\{\langle \base{i}, \vec{S} \rangle = \sum_{j=0}^m b_{ij}\sqrt{s_j}\}_{i=0}^m$. We have the following

\begin{prop}\label{prop:error}
		$\widetilde{A} = \sum_{i=0}^m c_i\widetilde{v_i} = \sum_{i=0}^m c_i \beta_i 2^{-e_i} > 0$ if and only if $A = \sum_{i=0}^m c_iv_i  > 0$
\end{prop}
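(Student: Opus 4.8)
The plan is to prove that the perturbation error $|\widetilde{A}-A|$ is strictly smaller than $|A|$ itself; this immediately forces $\widetilde{A}$ and $A$ to have the same sign, and in particular to be simultaneously positive, which is exactly the claimed equivalence. If $\vec{\Delta}=\vec{0}$ then $A=0$, all $c_i=0$, and hence $\widetilde{A}=0$ too, so neither side is positive; we may therefore assume $\vec{\Delta}\neq\vec{0}$, so that $A\neq 0$ by Proposition~\ref{prop:lindep}, the index $m_1$ of Proposition~\ref{prop:lower} is well defined, and $c_i=0$ for every $i>m_1$. Writing $\widetilde{A}-A=\sum_{i=0}^{m_1}c_i(\widetilde{v}_i-v_i)$, the whole argument comes down to bounding each $|\widetilde{v}_i-v_i|$ and then summing against the $|c_i|$.

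The first ingredient is a \emph{relative} error bound on each coordinate. Since $\base{i}$ was drawn from $W$, we have $v_i=\langle\base{i},\vec{S}\rangle>0$, so we may take $e_i$ with $2^{-e_i}\le v_i<2^{-e_i+1}$ and let $\beta_i$ be the truncation of $2^{e_i}v_i\in[1,2)$ to $\Theta(m^2)$ bits; then
\[
|\widetilde{v}_i-v_i| \;=\; |\beta_i-2^{e_i}v_i|\cdot 2^{-e_i} \;\le\; 2^{-\Omega(m^2)}\cdot 2^{-e_i} \;\le\; 2^{-\Omega(m^2)}\,v_i .
\]
(Root separation via Propositions~\ref{prop:rootsep} and~\ref{prop:degree}, applied to the minimal polynomial of $v_i=\sum_j b_{ij}\sqrt{s_j}$, gives $2^{-2^{O(m)}}<v_i<k^{O(1)}$, so $|e_i|\le 2^{O(m)}$ is expressible in $O(m)$ bits, as already remarked.) Working with the \emph{relative} bound is essential: the $v_i$, and the lower bound $v_{m_1}$ we shall have to beat, can be doubly exponentially small, so an absolute bound on $|\widetilde{v}_i-v_i|$ would be worthless.

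The second ingredient is monotonicity of the basis inner products: the greedy ``lightest basis'' construction makes the set of vectors of $W$ eligible to be $\base{i}$ a subset of those eligible to be $\base{i-1}$, so the smallest available inner-product magnitude only grows, i.e.\ $v_0\le v_1\le\cdots\le v_m$; in particular $v_i\le v_{m_1}$ for all $i\le m_1$. Combining this with the Cramer bound $|c_i|\le\mathcal{O}((m+1)!\,k^{2m})=2^{O(m\log km)}$ from the preceding discussion and with $m=\Theta(k)$ (Proposition~\ref{prop:sqfree}), so that $O(m\log km)=O(m\log m)$, we obtain
\[
|\widetilde{A}-A| \;\le\; \sum_{i=0}^{m_1}|c_i|\,|\widetilde{v}_i-v_i| \;\le\; 2^{-\Omega(m^2)}\,(m+1)\,2^{O(m\log m)}\,v_{m_1} \;<\; v_{m_1},
\]
once the hidden constant in the $\Theta(m^2)$ bits of $\beta_i$ is chosen large enough to dominate $O(m\log m)+\log(m+1)$ for all $m$. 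Since $v_{m_1}<|A|$ by Proposition~\ref{prop:lower}, we conclude $|\widetilde{A}-A|<|A|$, which finishes the proof.

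The only genuinely delicate point is the pairing just described — a \emph{relative} per-coordinate approximation bound together with the chain $v_i\le v_{m_1}$ (for $i\le m_1$) and $|A|>v_{m_1}$ from Proposition~\ref{prop:lower} — so that the common, possibly doubly exponentially small, factor $v_{m_1}$ cancels and one is left only to balance the benign quantities $2^{O(m\log m)}$ and $2^{\Omega(m^2)}$. Everything else is bookkeeping on exponents, and choosing the bit-length of $\beta_i$ to be a sufficiently large multiple of $m^2$ seals it.
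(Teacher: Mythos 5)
Your proof is correct and follows essentially the same route as the paper's: decompose $\widetilde{A}-A=\sum_i c_i(\widetilde{v}_i-v_i)$, use the relative bound $|\widetilde{v}_i-v_i|\le 2^{-\Omega(m^2)}2^{-e_i}$ from the mantissa truncation together with $v_i\le v_{m_1}$ and the Cramer bound on $|c_i|$, and compare against the lower bound $|A|>v_{m_1}$ of Proposition~\ref{prop:lower}. You are somewhat more explicit than the paper in justifying the monotonicity $v_0\le\cdots\le v_m$ and in handling the degenerate case $\vec{\Delta}=\vec{0}$, but the argument is the same.
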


 \begin{proof}
 If we use $\widetilde{v_i}$ instead of $v_i$, we can upper bound the total accumulated error incurred as follows:
		
		
	\begin{align*}
		|\sum_{i=0}^m c_i (v_i - \widetilde{v_i})| \hspace{0.3cm}&\leq   \sum_{i=0}^m c_i |v_i - \widetilde{v_i}| \hspace{0.3cm}= \hspace{0.3cm}\sum_{i=0}^m c_i \left| v_i  - \beta_i2^{-e_i}\right| \\
		\hspace{0.3cm} &\leq \sum_{i=0}^m (2^{m \log{m}})\cdot (2^{-m^2} \cdot 2^{-e_i}) \hspace{0.3cm} \leq \hspace{0.3cm} m \cdot (2^{m \log{m}})\cdot ( 2^{-m^2} 2^{-e_{m_1}}) \\
	\end{align*}
		
\noindent where the inequality $|v_i  - \beta_i2^{-e_i}| \leq (2^{-m^2} \cdot 2^{-e_i})$ above is obtained by noticing that by definition the approximation $\widetilde{v_i}$ agrees with $v_i$ on the mantissa. The last inequality is obtained by noting that $v_{m_1}$ is the basis vector with the largest magnitude that is used in expressing $A$ in terms of the lightest basis. We can now lower bound the ratio of value of the given instance of {\uussr} to the error accrued is at least
		\begin{equation}\label{eqn:ratio}
		\frac{|\sum_{i=0}^m c_i v_i|}{|\sum_{i=0}^m c_i (v_i - \widetilde{v_i})|}\hspace{0.3cm} \geq \hspace{0.3cm} \frac{2^{-e_{m_1}}}{m 2^{m \log{m}} 2^{-m^2} 2^{-e_{m_1}}} \hspace{0.3cm} = \hspace{0.3cm} 2^{\Omega(m^2)}
		\end{equation}
		
This means that the total error accrued by using $\widetilde{v_i}$ instead of $v_i$ is a very small fraction of the absolute value of $A$ and hence the ${\cal O}(m^2)$-bit approximation to each $v_i$, namely $\widetilde{v_i} = \beta_i2^{-e_i}$ suffices to compute the correct sign of $A$. 
\end{proof}

We are now ready to give our second proof of the {$\ppoly$} upper bound.

\begin{proof} 
We have as our $\poly(k)$-length advice string:
\begin{enumerate}
\item 	The lightest basis $\{\base{i} \in [-k^{2}, k^{2}]^m\}_{i=0}^m$.
\item  $\{\widetilde{v}_i = \beta_i2^{-e_i}\}_{i=0}^m$, which are approximations to $\{\langle \base{i}, \vec{S} \rangle = \sum_{j=0}^m b_{ij}\sqrt{s_j}\}_{i=0}^m$.
\end{enumerate}

	\noindent Given this advice, our $\Pt/\poly$ algorithm for the {\uussr} instance $A = \langle \vec{\Delta}, \vec{S} \rangle$ is straight forward:  Firstly notice that $A \neq 0$ by Proposition~\ref{prop:lindep} and this can be checked in polynomial time~\cite{blomer, HBMOW} as well. If $\vec{\Delta}$ is present in the lightest basis or a multiple of one of the basis elements, then the sign is just obtained directly from the advice string in conjuction with the polynomial time tests mentioned above. Therefore, let us assume that $\Delta$ is a non-trivial linear combination of the basis vectors. We proceed as follows:
	\begin{enumerate}
		\item Find $c_i$ such that $\vec{\Delta} = \sum_{i=0}^m c_i \base{i}$. 
		\item Return the sign of the appropriate linear combination $\widetilde{A} = \sum_{i=0}^m c_i \widetilde{v}_i$.
	\end{enumerate}

 Finding $c_i$ can be done by solving a system of linear equations, and hence is in {\Pt}. However, it is not clear how to return the sign of the linear combination $\widetilde{A} = \sum_{i=0}^m c_i \widetilde{v}_i$, because this involves arithmetic with exponential-bit numbers! Recall that by the root separation bound, we only ensured that $e_i \leq 2^{{\cal O}(m)}$, so $\widetilde{v_i}$'s are still exponetial-bit approximations to $v_i$ which admit a succinct representation.  We are now left with the task of ensuring that (approximately) computing the linear combination can be done efficiently. For this, let $m_1 \in [m]$ be the largest index such that $c_{m_1} \neq 0$. Also let $m_0 \leq m_1$ be the smallest index such that $e_{m_0} - e_{m_1} \leq m^2$.
Let 
\begin{equation}\label{eqn:drop}
	\widetilde{V}  =  \sum_{i=m_0}^{m_1}{c_i\beta_i 2^{-e_i}}  =  \left(\sum_{i = m_0}^{m_1}{c_i\beta_i 2^{-(e_i - e_{m_1})}}\right)2^{-e_{m_1}}
	\end{equation}
	where the quantity inside the large parentheses is a sum of at most $m$ 
	rationals each with at most a $2m^2$-bit representation, and hence can be computed
	in $\TCz$. Essentially we drop all those terms whose exponents are larger than $m_0$.
	It follows that $\widetilde{V}$ is a good approximation
	to $\widetilde{A}$ and hence also to $A$. To this end, notice that we introduce two
	kinds of errors while making this approximation: 
 \begin{enumerate}
     \item  The error due to truncation (as argued in Proposition~\ref{prop:error}). This error is upper bounded in absolute value by 
	$(m_1 - m_0 + 1)2^{-{(e_{m_1} + m^2)}}$.
 \item The error due to dropping of all terms preceding $m_0$ in Equation~\ref{eqn:drop}. This error is upper bounded in absolute value by $(m_0 - 1)2^{(-e_{m_0-1} + m^2)}$. 
 \end{enumerate}
 But we know from the definition of $m_0$ that $|e_{m_0 - 1}| > |e_{m_1}| + m^2$. Thus the total error is upper bounded in magnitude by $m_12^{-{(e_{m_1} + 2m^2)}}$. As a fraction of the magnitude of the {\uussr} instance, this error is small and as argued previously via Equation~\ref{eqn:ratio}, our approximation using $\widetilde{v_i}$ always returns the right answer.
	%
	%
\end{proof}

\begin{remark}
Note that our second proof seems weaker than the first proof since it seems to require solving a system of linear equations which is not known to be implementable in $\TCz$. However, notice that if we provide the inverse of the $(m+1) \times (m+1)$ matrix $[\base{0}, \base{1}, \dots, \base{m}]$ as advice, the algorithm above can be implemented in $\TCz$.  Note that this is only an additional $\poly(m) = \poly(k)$ bits long since each entry is a $\poly(k)$-bit rational number.
\end{remark}
%


\section{Summary and Conclusion}

We have presented two non-uniform algorithms for {\ussr}. Both can be viewed as ``compressing'' exponentially many input instances into a polynomial-length advice string such that each instance can be solved \emph{easily} given the advice. Moreover, it is easy to see that the unary nature of {\ussr} does not play an important role here. In other words, given any set of $k$ \emph{fixed} positive integers the corresponding $2^k$ instances of {\ssr} can be similarly compressed into a polynomial length advice string from which extracting the solution to an instance is easy. In fact, the compression is possible for the signed sum of any $k$ \emph{fixed} real numbers.

A notable point about the numerical approximation proof is that the number of basis elements $b_i$ is only $m+1$ while by the root separation bound their exponents $e_i$ are distributed over an exponential range in $m$. Thus if the $e_i$'s are ``well-distributed'' i.e. consecutive $e_i$'s differ by $\Omega(m^2)$ then we just need to consult the value of \emph{one} basis element to decide on an instance. 
Prior to our work, there was no difference in the computational complexity between {\ussr} and {\ssr}. Our work raises a few natural questions:
\begin{enumerate}
\item Is there a polynomial time algorithm for {\ussr}? Is there a way to use both the proofs in tandem to \emph{verify} the advice? This would give an {\NP} algorithm for {\ussr}. Even a {\bf PH} algorithm will be a significant improvement to the current state of affairs.
\item Consider the proof via LTFs: Is it possible to estimate the number of vertices in the polytope of the LTF obtained from the {\uussr} instance? Can we characterize when the vertices will be determined by \emph{easy} {\uussr} instances?

\item Is it possible to eliminate either the values $\{v_i\}_{i=0}^{m}$ or the basis $\{\base{i}\}_{i=0}^m$ from the advice string in the numerical approximation proof? 
\item Is {\ssr} $\in \Pt/\poly$? 
\item Can reasoning similar to Remark~\ref{rmk:large} be used to give better bounds on distribution on roots of a univariate polynomial, or give improved root separation bounds?
\end{enumerate}

\paragraph*{{\bf Acknowledgements}} We would like to thank Nikhil Mande, Mahesh Sreekumar Rajasree, Vikram Sharma, Mahsa Shirmohammadi and James Worrell for many insightful discussions around the Sum of Square Roots problem. N.B is partially supported by the TBO faculty fellowship (IIT Delhi) and the CNRS International Emerging Actions grant (IEA’22).
\bibliographystyle{alpha}
\bibliography{main}

\appendix

\section{Muroga's Proof} \label{app: muroga}
	

Let $\mathbb{B}$ be any bounded discrete domain.
\begin{theorem}
	Let $b$ be the maximum magnitude of elements in $\mathbb{B}$. For any LTF  $f: \mathbb{B}^m \to \{\pm 1\}$ realized using weights $w_0, w_1, \dots, w_m \in \mathbb{R}$, there exists an equivalent integer realization with weights $u_0, u_1 \dots, u_m \in \mathbb{Z}$, where $1 \leq |u_0|, \dots, |u_m| \leq {\cal O}((m + 1)! b^m)$.
\end{theorem}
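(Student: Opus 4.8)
The plan is to realize $f$ as a small \emph{basic solution} of the polyhedron of real unit-margin realizations, and then to bound that solution's coordinates by Cramer's rule.

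First I would normalize. Since $\mathbb B$ is finite, after an arbitrarily small shift of $w_0$ we may assume $\sum_{i=1}^m w_i x_i - w_0 \neq 0$ for every $x \in \mathbb B^m$; then $\varepsilon := \min_{x}\bigl|\sum_{i=1}^m w_i x_i - w_0\bigr| > 0$, and dividing each $w_i$ by $\varepsilon$ we may assume $\sum_{i=1}^m w_i x_i - w_0 \ge 1$ whenever $f(x)=1$ and $\le -1$ whenever $f(x)=-1$. Now introduce variables $y=(y_0,\dots,y_m)$ and the system $Ay \ge \mathbf 1$ whose rows are $-y_0 + \sum_{i=1}^m x_i y_i \ge 1$ for each $x$ with $f(x)=1$, and $y_0 - \sum_{i=1}^m x_i y_i \ge 1$ for each $x$ with $f(x)=-1$. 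Here $A$ is an \emph{integer} matrix with at most $|\mathbb B|^m$ rows, exactly $m+1$ columns, first column $\pm 1$, and all other entries of absolute value $\le b$ (we may assume $b\ge 1$). The normalization says the polyhedron $\mathcal P = \{y: Ay \ge \mathbf 1\}$ is non-empty, and any integer point $u \in \mathcal P$ is an integer realization of $f$: for $f(x)=1$ the matching row gives $\sum_i x_i u_i \ge u_0 + 1 > u_0$, and for $f(x)=-1$ it gives $\sum_i x_i u_i \le u_0 - 1 < u_0$.

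The core step is to produce an integer point of $\mathcal P$ with small coordinates. I would pick a minimal non-empty face $F$ of $\mathcal P$; by the standard description of the faces of a polyhedron, $F$ equals an affine subspace $\{y : A_I y = \mathbf 1_I\}$ for a set $I$ of rows with $\mathrm{rank}(A_I)=\mathrm{rank}(A)=:r\le m+1$, and $F \subseteq \mathcal P$. Shrinking $I$ to $r$ linearly independent rows and choosing columns $J$ so that the $r\times r$ submatrix $A_{IJ}$ is nonsingular, set the coordinates outside $J$ to $0$ and solve $A_{IJ}\,y_J = \mathbf 1_I$; this gives a point $\bar y \in F \subseteq \mathcal P$, and Cramer's rule yields $\bar y_j = \det(M_j)/\det(A_{IJ})$ for $j \in J$, where $M_j$ is $A_{IJ}$ with column $j$ replaced by $\mathbf 1_I$. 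With $D := |\det(A_{IJ})| \ge 1$ the vector $u := D\bar y$ is integral, lies in $\mathcal P$ because $Au = D\,A\bar y \ge D\mathbf 1 \ge \mathbf 1$, and satisfies $|u_j| = |\det(M_j)|$ for $j\in J$ and $u_j = 0$ otherwise. Each $M_j$ is an $r\times r$ matrix with one all-ones column, at most one $\pm1$-column, and every other entry of magnitude $\le b$, so the Leibniz expansion bounds $|\det(M_j)| \le r!\,b^{\,r-1} \le (m+1)!\,b^{\,m}$, which is the claimed bound.

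To obtain the lower bound $|u_i| \ge 1$ one must avoid zero coordinates, so before the previous step I would perturb the real realization to a generic one with the same sign pattern --- harmless since the inequalities of the first step hold with slack --- which permits choosing $F$ and $\bar y$ with no vanishing coordinate, inflating the bound by only a constant factor; if one needs just the upper bound (as in the application of this theorem to {\ussr}) this can be omitted. I expect the main obstacle to be that $\mathcal P$ need not be \emph{pointed}: when $\mathrm{rank}(A) < m+1$, which happens for degenerate domains $\mathbb B$, there is no vertex to take, and it is exactly the minimal-face description that repairs the argument. When $\{0,1\}\subseteq\mathbb B$ --- in particular for $\mathbb B = \{-1,0,1\}$ as used in the proof of Theorem~\ref{thm:ussrppoly} --- the matrix $A$ has full column rank, $\mathcal P$ is pointed, and $\bar y$ is a genuine vertex, so that case is cleaner.
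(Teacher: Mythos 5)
Your proof takes essentially the same route as the paper's: set up the linear feasibility system with unit margins, take a basic (vertex/minimal-face) solution, bound its coordinates via Cramer's rule and a determinant estimate, and clear denominators. Your write-up is in fact more careful than the paper's at the points where it is sloppy --- the normalization to margin $\ge 1$, the possibility that the polyhedron is not pointed (the paper simply asserts a vertex exists), and why the scaled integer point remains feasible.
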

	\begin{proof}
		Given $w_1, \dots, w_n, \theta$, let $X$ be that subset of $\mathbb{B}^n$ for which $\sum_{i=1}^n w_i x_i \geq 1$. Now consider the linear program with variables $z_1, \dots, z_n$ and the following $|\mathbb{B}|^n$ constraints 
		
		\begin{align*}
			\sum_{i=1}^n z_ix_i &\geq 1 \hspace{1cm} \mbox{ for } (x_1, \dots, x_n) \in X \\
			\sum_{i=1}^n z_ix_i &\leq -1 \hspace{1cm} \mbox{ for } (x_1, \dots, x_n) \notin X
		\end{align*}
		
		Clearly, this is a feasible LP because $z_i = w_i/\theta$ is a feasible solution. Therefore it should also have a solution that is a vertex, i.e., a solution with $n$ tight LP constraints of the form $\sum_{i=1}^n z_ix_i = \pm 1$. We can solve such a system of linear equations $X{\bf z} = {\bf b}$ where $X \in{\mathbb{B}}^{n \times n}$ and ${\bf b} \in \{\pm 1\}^n$. By Cramer's rule, this should give a solution which is the ratio of two determinants. An upper bound on the determinant  (and hence any minor) of $X$ is $ {\cal O}(n! b^n)$. This gives a rational solution for $z_i$. By clearing the denominators, we get the claimed integer solution. It is clear that such a solution will satisfy all the constraints since it is a vertex of the polytope.
	\end{proof}

\end{document}